\title{Efficiency in Truthful Auctions\\ via a Social Network}
\author{Seiji Takanashi\footnote{Graduate School of Economics, Kyoto University, Yoshidahon-machi, Sakyo-ku, Kyoto-shi 606-8501, Japan; E-mail: s.takanashi1990@gmail.com}, Takehiro Kawasaki\footnote{Graduate School of Information Science and Electrical Engineering, Kyushu University, 744, Motooka, Nishi-ku, Fukuoka-shi 819-0395, Japan; E-mail: kawasaki@agent.inf.kyushu-u.ac.jp}, Taiki Todo\footnote{Graduate School of Information Science and Electrical Engineering, Kyushu University, 744, Motooka, Nishi-ku, Fukuoka-shi 819-0395, Japan; E-mail: todo@inf.kyushu-u.ac.jp}, Makoto Yokoo\footnote{Graduate School of Information Science and Electrical Engineering, Kyushu University, 744, Motooka, Nishi-ku, Fukuoka-shi 819-0395, Japan; E-mail: yokoo@inf.kyushu-u.ac.jp}}
\date{\today}
\begin{document}
\bibliographystyle{ecta}
\interfootnotelinepenalty=10000
\theoremstyle{definition}
\newtheorem{theo}{Theorem}
\newtheorem{defi}{Definition}
\newtheorem{lemm}{Lemma}
\newtheorem{prop}{Proposition}
\newtheorem{coro}{Corollary}

\maketitle 

\newcommand{\argmax}{\mathop{\rm arg~max}\limits}
\newcommand{\argmin}{\mathop{\rm arg~min}\limits}
\maketitle

\begin{abstract}
In this paper, we study efficiency in truthful auctions via a social network, where a seller can only spread the information of an auction to the buyers through the buyers' network. In single-item auctions, we show that no mechanism is strategy-proof, individually rational, efficient, and weakly budget balanced. In addition, we propose $\alpha$-APG mechanisms, a class of mechanisms which operate a trade-off between efficiency and weakly budget balancedness. In multi-item auctions, there already exists a strategy-proof mechanism when all buyers need only one item. However, we indicate a counter-example to strategy-proofness in this mechanism, and to the best of our knowledge, the question of finding a strategy-proof mechanism remains open. We assume that all buyers have decreasing marginal utility and propose a generalized APG mechanism that is strategy-proof and individually rational but not efficient. Importantly, we show that this mechanism achieves the largest efficiency measure among all strategy-proof mechanisms. 
\end{abstract}

\section{Introduction} 

Imagine that an auction will take place. Until it starts, the auctioneer wants to spread information about it to all the people who might have interest in the goods being sold. Firms might pay a lot for advertising before auctions occur. Many social network services (SNSs) allow connections among people who can spread advertisements. Many researchers have investigated advertisements and such social networks, e.g., \cite{emek2011mechanisms}, \cite{borgatti2009network}, \cite{jackson2010social}, and \cite{kempe2003maximizing}. In this situation, the auctioneer may assume that those who read the advertisements will spread such information through their SNSs. However, they may be reluctant because they themselves can benefit from fewer participants at the auction. If the auctioneer wants the participants to spread the word about the auction, he has to incentivize them to distribute it to their neighbors. Since standard auction theory has not studied such auctions with a network of buyers, results are scant about auction advertisements through a buyers' network. 

To the best of our knowledge, \cite{li2017mechanism} is the first paper that deals with this issue. They generalize a classical single-item auction to one with a social network, which consist of all buyers and a seller who is located somewhere in the network but is restricted from directly spreading the information to all the buyers. The authors assume that all the buyers only get information through the network and analyzed two well-known mechanisms, the Vickrey-Clarke-Groves (VCG) mechanism (\cite{vickrey1961counterspeculation}, \cite{clarke1971multipart}, \cite{groves1973incentives}) and the Information Diffusion Mechanism. They extended these two mechanisms to auctions via a social network and prove that they are strategy-proof, where no buyer has an incentive to untruthfully report their valuation OR to prevent the information from being spread to their neighbors. They also prove that the VCG mechanism is efficient but not weakly budget balanced, that is, the seller's revenue can be negative, and the Information Diffusion Mechanism is weakly budget balanced but not efficient. \cite{zhao2018selling} also study such auctions, generalize the Information Diffusion Mechanism proposed by \cite{li2017mechanism}, and show that the Generalized Information Diffusion Mechanism is strategy-proof, individually rational, and weakly budget balanced. However, in this paper we show a counter-example to strategy-proofness in this mechanism. 

The following are our results. Our model of auctions via a social network is almost the same as the one analyzed in these above two papers. In single-item auctions, we prove that no mechanism is strategy-proof, individually rational, weakly budget balanced, and efficient. In other words, under classical criteria, the seller's revenue might be negative if she sells the item to the buyer with the largest valuation. We propose for a single-item auction $\alpha$-APG mechanisms, which is a class of mechanisms which operate a trade-off between efficiency and weakly budget balancedness. In multi-item auctions, we indicate a counter-example to strategy-proofness in the Generalized Information Diffusion Mechanism proposed by \cite{zhao2018selling}. In addition, we analyze efficiency in multi-item auctions with decreased marginal utility. The assumption that buyers have decreasing marginal utility is quite common in economics. We propose the Generalized APG Mechanism, which is strategy-proof and individually rational, and prove that it achieves the largest efficiency measure among all strategy-proof mechanisms, where the efficiency measure is the worst-case ratio of the social surplus achieved by the mechanism to the optimal social surplus. 

The following is the main difference between the Generalized Information Diffusion Mechanism (or the Information Diffusion Mechanism) and our approach. An important concept for the two Information Diffusion Mechanisms is a critical diffusion node that shares the information with a buyer who can only get it through the node. Unlike these two mechanisms, an important concept for our mechanisms is the Aligned Path Graph (APG), which is comprised of social networks and prioritizes each buyer. This graph simplifies our mechanisms. 

The following papers are related to ours. \cite{edelman2007internet} investigate generalized second price auctions for the market of internet advertisements. Although its mechanism greatly benefits companies like Google and has some interesting properties, truth-telling is not an equilibrium in it. We focus on truth-telling. \cite{bergemann2002information} study a mechanism design setting where each agent can acquire information about the state of the world before she joins. In contrast, we study a mechanism design setting where each agent can spread the information about an auction's existence. \cite{kempe2003maximizing} analyze a social network where ideas and influence propagate. Their main question is how to choose a set of targeted individuals. Although we have a similar issue concerning how to spread information, we examine auctions.

The structure of this paper is as follows. Section 2 describes a model of auctions and gives definitions. In Section 3, we analyze a single-item auction via a social network. In Section 4, we analyze the multi-item auction, indicate the counter-example, and describe the Generalized APG mechanism. In Section 5, we conclude. 

\section{Model}

Consider a set of buyers, $N=\{1, \ldots, n\}$, and the seller $s$ who wants to sell $k$ identical items. For each $i \in N\cup\{s\}$, s/he can tell the information to the neighbors $R_i \subseteq N \setminus \{i\}$, and each buyer has their valuation profile $v_i = (v_{i,1}, v_{i,2}, \ldots, v_{i,k})$, where $v_{i,1} \geq v_{i,2} \geq \cdots \geq v_{i,k} \geq 0$ on the items. Namely, the buyers have decreasing marginal utility on the items. Define the type of buyer $i$ as $\theta_i = (v_i, R_i)$. We assume that all buyers can declare their type $\theta'_i = (v'_i, R'_i)$, s.t. $R'_i \subseteq R_i$, i.e., we assume buyer $i$ can choose not to send the information to some of her neighbors, but she cannot pretend to send the information to non-neighbors. This is because we assume that each buyer can tell the information to the buyers whom they know, and if they know buyers, they can tell the information to the buyers. We assume that buyer $i$ gets the information if and only if there is a sequence from $s$ to $i$, $(i_0, i_1, \ldots, i_l)$, where $i_{l'+1} \in R'_{i_{l'}}$ for any $l'=0, \ldots,l-1$, $i_0 = s$, and $i_l = i$. Let $\Theta_i$ denote a set of all possible types $i$ can declare, let $\boldsymbol{\Theta}$ denote $\Pi_{i\in N} \Theta_i$, and let $\boldsymbol{\Theta}_{-i}$ denote $\Pi_{j\in N \setminus \{ i \}} \Theta_j$. Define $G$ as the directed graph $(N \cup \{ s \}, E)$, where $N \cup \{ s \}$ is a set of nodes, and $E = \{ (i,j) \in N \cup \{ s \} \times N \mid j \in R'_i \}$ is a set of edges. Let $\boldsymbol{\theta'} = (\theta'_1, \ldots, \theta'_n)$ denote the profile of all buyers' declared types. We use standard notations: $\boldsymbol{\theta'}_{-i}$ is a profile of the buyers' declared types except for $i$, $(\theta'_i, \boldsymbol{\theta'}_{-i})$ is a profile of the buyers' declared types where $i$ declares $\theta'_i$ and the other buyers declare $\boldsymbol{\theta'}_{-i}$. A mechanism $\cal M$ is a pair of allocation function $f_{i}: \boldsymbol{\Theta} \to \mathbb{Z}_{\geq 0}$ and payment function $p_{i,l}: \boldsymbol{\Theta} \to \mathbb{R}$ for all $i \in N$ and all $l=0, \ldots, k$. $f_{i}$ is the number of the items buyer $i$ gets, and $p_{i,l}$ is a payment of the $l$-th item for buyer $i$. When a buyer $i$ gets $k'$ items, the utility of the buyer is $\sum_{l=1}^{k'} (v_{i,l} - p_{i,l})$, and when the buyer cannot get the item, $i$'s utility is $p_{i,0}$. Namely, we assume a standard quasi-linear utility model, and we write the utility of buyer $i$ as $u^{\cal M}_i (\theta_i, \boldsymbol{\theta'}_{-i})$ when $i$'s type is $\theta_i$ and the other buyers declare $\boldsymbol{\theta'}_{-i}$ under a mechanism $\cal M$. We focus on feasible mechanisms: for any $\boldsymbol{\theta'} \in \boldsymbol{\Theta}$, 
\begin{itemize}
\item $\sum_{i \in N} f_{i}(\boldsymbol{\theta'}) \leq k$ 
\item $f_i(\boldsymbol{\theta'}) = p_{i,l}(\boldsymbol{\theta'}) = 0$ for any $i \in N$ and any $l=0,\ldots,k$ if buyer $i$ does not get the information. 
\end{itemize}
The second item means that the buyers who do not get the information cannot participate in the auction. The convention is that these buyers declare a type. 

First of all, we will define strategy-proofness of a mechanism: 

\begin{defi}
A mechanism is strategy-proof if and only if $u^{\cal M}_i((v_i, R_i), \boldsymbol{\theta'}_{-i}) \geq u^{\cal M}_i((v'_i, R'_i), \boldsymbol{\theta'}_{-i})$ for any $(v_i, R_i), (v'_i, R'_i) \in \Theta_i$ s.t. $R'_i \subseteq R_i$ and any $\boldsymbol{\theta'}_{-i} \in \boldsymbol{\Theta}_{-i}$. 
\end{defi}
This strategy-proofness means that not only telling a lie about their valuation but also stopping the information to their neighbors does not benefit for all buyers. Secondly, we will define individual rationality of a mechanism: 

\begin{defi}
A mechanism is individually rational if and only if $u^{\cal M}_i(\theta_i, \boldsymbol{\theta'}_{-i}) \geq 0$ for any $\theta_i \in \Theta_i$ and any $\boldsymbol{\theta'}_{-i} \in \boldsymbol{\Theta}_{-i}$. 
\end{defi}
This means that the utility of each buyer is not smaller than 0 for any declaration of the other buyer when the buyer tells the truth. Thirdly, we will define a efficiency measure: 

\begin{defi}
A mechanism has a competitive ratio of $\alpha$ for efficiency (in short, is $\alpha$-efficient) if and only if 
\begin{equation*}
\alpha = \min_{\boldsymbol{\theta'} \in \boldsymbol{\Theta}} \frac{\sum_{i \in N} \sum_{l=1}^{f_{i}(\boldsymbol{\theta'})} v_{i,l}}{\max_{f' \in \cal{F}} \sum_{i \in N} \sum_{l=1}^{f'_{i}(\boldsymbol{\theta'})} v_{i,l}}, 
\end{equation*}
where $\cal{F}$ is a set of all feasible allocation rules. 
\end{defi}
This is a standard measurement to what extent the mechanism achieves efficiency, as \cite{procaccia2013approximate} propose. This efficiency measure is the worst-case ratio of the social surplus the mechanism achieves to the optimal social surplus. $\alpha$ is in the interval $[0,1]$ by definition, and the mechanism which is 0-efficient is the worst from the viewpoint of efficiency. The mechanism which is 1-efficient is the best, and in this case, we call that the mechanism is efficient. 

Finally, we will inductively define a path graph $P$, where one of the end points is the seller, and the next buyer of a buyer (or the seller) is the one with the youngest number among the remaining buyers who have the shortest distance from the seller under $G$. This is the Aligned Path Graph (APG). Let $P_i^{close}$ be a set of buyers closer than $i$ to the seller, let $P_i^{far}$ be a set of buyers farther away than $i$, and let $l^P_i$ be the distance between the seller and $i$ under $P$. Figure 1 represents how to make an APG. The number in each circle is assigned to each buyer. The top graph represents $G$. The closest buyers to the seller in $G$ are buyers 1 and 6. Then, buyer 1 is the closest in APG, and buyer 6 is the second closest. The second closest buyers to the seller in $G$ are buyers 3 and 5. Then, buyer 3 is the third closest in APG, and buyer 6 is the fourth closest. The third closest buyers to the seller in $G$ are buyers 2 and 4. Then, buyer 2 is the fifth closest in APG, and buyer 4 is the sixth closest. 
\begin{figure}[]
   \centering
    \includegraphics[width=7cm]{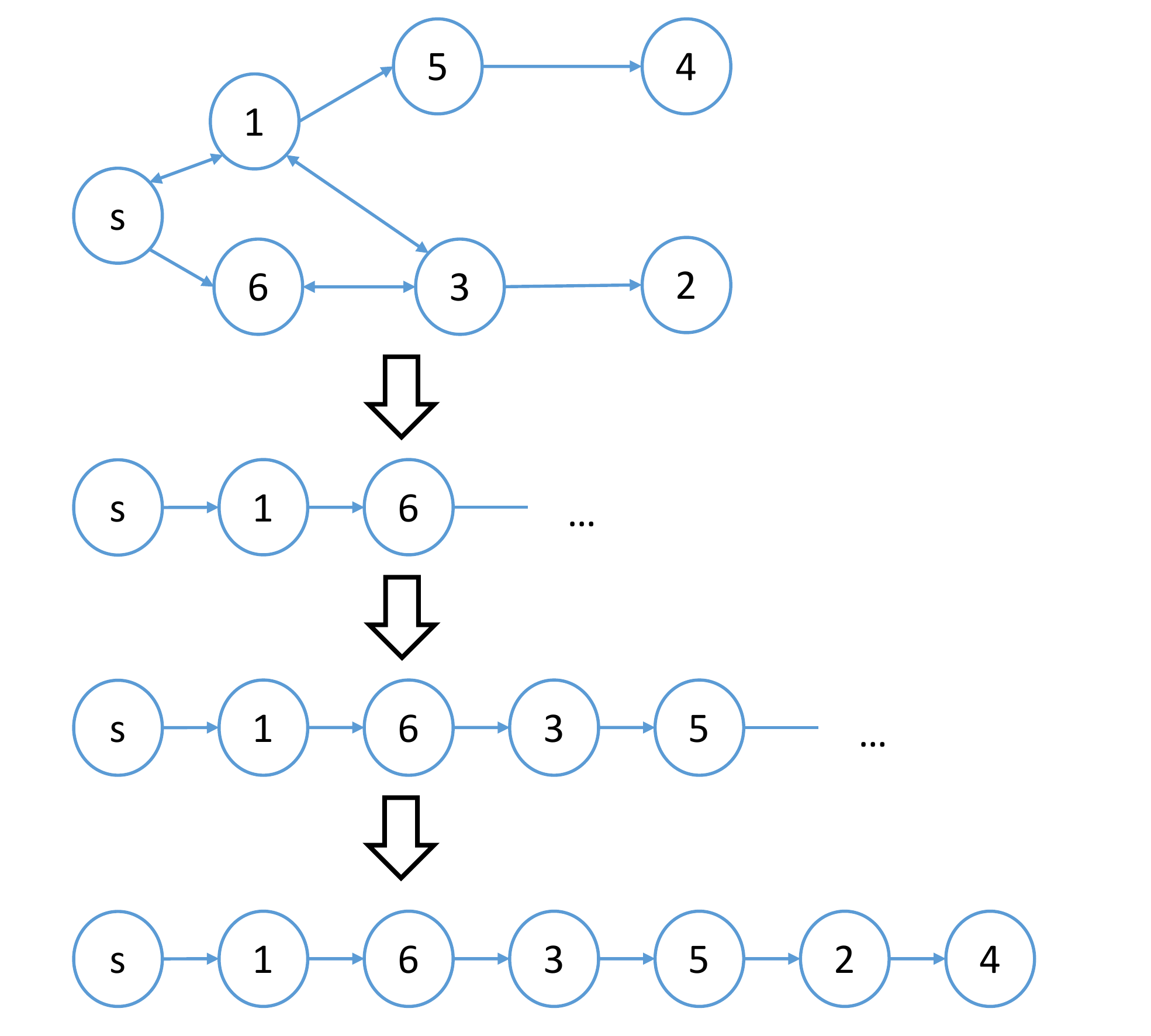}
     \caption{Aligned Path Graph (APG)}
\end{figure}

\section{Mechanisms for single-item auctions}

In this section, we will consider single-item auctions. We write $p_i$ as $p_{i,1}$ when buyer $i$ gets an item and $p_{i,0}$ when buyer $i$ gets no item in this section. We assume the largest valuation is bounded, i.e., for any $\theta_i = (v_i, R_i) \in \Theta_i$, $v_i \leq v^*$ holds in this section. Before defining our mechanism, we will mention two mechanisms, Information Diffusion Mechanism (IDM) and the VCG Mechanism (VCG). In auctions via the social network, IDM and the VCG are proposed in \cite{li2017mechanism}; and prove that both are strategy-proof and individually rational. In addition, they prove that the seller's revenue can be negative in the VCG and propose IDM as a mechanism in which the seller's revenue is always non-negative. 

We will define $\beta$-weakly budget balancedness ($\beta$-WBB), which is the ratio of the worst revenue in the VCG to the worst revenue in a mechanism. 

\begin{defi}
A mechanism is $\beta$-WBB if and only if 
\begin{equation*}
\beta = 1 + \min_{\boldsymbol{\theta'} \in \boldsymbol{\Theta}} \frac{\sum_{i \in N} p_i(\boldsymbol{\theta'})}{(n-1) v^*}. 
\end{equation*}
\end{defi}
Note that for the VCG, $\beta=0$, i.e., in the worst-case, each buyer except the winner gets side-payment that is $v^*$, while the winner pays nothing. If $\beta \geq 1$, the revenue in a mechanism is not negative. If $\beta \geq 1$, we just say that the mechanism is WBB. As we state, IDM is WBB, and the VCG is $0$-WBB. In fact, efficiency is inconsistent with WBB under some basic criteria, as the following proposition states, and \citet{myerson1983efficient} show in the context of bilateral trading. Let $v^*_{S}$ be the largest declared valuation among $S \subseteq N$. 
\begin{prop}
There is no mechanism which is strategy-proof, individually rational, efficient, and weakly budget balanced. 
\end{prop}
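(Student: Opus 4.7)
The plan is to exhibit a small instance on which the four axioms jointly fail. I take the network consisting of the single path $s \to 1 \to 2$ (so $R_s = \{1\}$, $R_1 = \{2\}$, $R_2 = \emptyset$) and consider buyer $1$ with true type $(c, \{2\})$ and buyer $2$ with true type $(v_2, \emptyset)$, where $0 < c < v_2 \leq v^*$. Efficiency forces the item to go to buyer $2$, and the objective is to show that the unique payment scheme compatible with strategy-proofness and individual rationality forces the seller's revenue to be strictly negative.

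First I examine the deviation in which buyer $1$ reports $R_1' = \emptyset$, so that buyer $2$ is no longer informed and, by feasibility, receives nothing and pays nothing. Efficiency then allocates the item to buyer $1$, and Myerson's lemma applied to the one-dimensional problem of varying buyer $1$'s reported valuation (with $R_1' = \emptyset$ held fixed) forces her winning payment to be a constant in her own report. This constant is pinned down by evaluating IR and WBB on the auxiliary true type $(0, \emptyset)$: IR gives that buyer $1$'s utility at valuation zero is non-negative, hence the constant is at most zero, while WBB at the same profile (where the seller's whole revenue is that constant) gives that it is at least zero. Hence buyer $1$ can guarantee herself utility exactly $c$ by refusing to propagate.

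Next I invoke strategy-proofness for propagation: under truthful reporting, buyer $1$ loses, so her utility must be at least $c$. Applying Myerson's lemma once more, this time to buyer $1$ varying her reported valuation with $R_1' = \{2\}$ held fixed, forces her losing-side utility to be independent of her reported valuation throughout the range $[0, v_2)$. Letting $c \to v_2^-$ in the propagation inequality, this common utility must be at least $v_2$, so the side-payment she receives from the seller is at least $v_2$. A symmetric Myerson-type analysis for buyer $2$, with the additive constant pinned down by IR and WBB at $v_2' = 0$ (where buyer $1$ wins at price zero), forces buyer $2$'s winning payment to equal $c$.

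Combining these, the seller's revenue in the original instance is at most $c - v_2 < 0$, contradicting WBB. The main obstacle is handling the additive constants that Myerson's lemma leaves free after fixing the allocation: each must be eliminated by a carefully chosen boundary instance where IR supplies one sign and WBB supplies the reverse. Once these constants are pinned down, the remaining argument is the natural network analogue of the Myerson--Satterthwaite reasoning cited in the statement.
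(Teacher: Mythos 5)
Your proof is correct and follows essentially the same route as the paper's: a Myerson-type payment characterization pins the winner's payment to the second-highest bid, while the deviation of cutting off the high-valuation buyer forces the intermediary's side payment up to $v_2$ (via the limit $c \to v_2^-$), yielding revenue at most $c - v_2 < 0$. Your explicit pinning of the additive constants via IR and WBB at boundary profiles is a slightly more careful rendering of a step the paper treats implicitly, but the argument is the same.
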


\begin{proof}
See Appendix A.1.
\end{proof}

This proposition gives the first axiomatic analysis for auctions via a social network, and this result complements the results by \cite{li2017mechanism}. 

Next; we will propose a class of mechanisms which operate a trade-off between efficiency and weakly budget balancedness in the sense that it is $\alpha$-efficient and $(1-\alpha)$-WBB, where $\alpha \in (0,1)$. We call this class the $\alpha$-APG mechanisms. Define the mechanism $(\{ f_i \}_{i \in N}, \{ p_i \}_{i \in N})$ as follows: 
\begin{equation*}
f_i(\boldsymbol{\theta'}) = \begin{cases}
1 \text{ if } i = \argmin_{j \in M} l^P_j, \\
0 \text{ otherwise, } 
\end{cases}
\end{equation*}
and  
\begin{equation*}
p_i (\boldsymbol{\theta'}) = \begin{cases}
v^*_{P_i^{close}} \text{ if } i = w \text{ and } v^*_{P_i^{close}} < \alpha v^*_{P_i^{far}}, & \text{(Group 1)} \\
- \alpha v^*_{N} + v^*_{P_i^{close}} \text{ if } i \in P_w^{close}, & \text{(Group 2)} \\
\frac{v^*_{P_i^{close}}}{\alpha} \text{ if } i = w \text{ and } v^*_{P_i^{close}} \geq \alpha v^*_{P_i^{far}}, & \text{(Group 3)} \\ 
0 \text{ if } i \in P_w^{far}, & \text{(Group 4)}
\end{cases}
\end{equation*}
where $M = \{ i \in N \mid v'_i \geq \alpha v^*_{N} \}$, and $w$ is the winner. The winner of the mechanism is the closest buyer to the seller in APG whose valuation is not $\alpha$ times smaller than the maximum valuation. The payment is different for a classification of buyers because the threshold of a buyer's declared valuation to get the item is different. As we will show, the sum of the payment when a buyer gets the item and the payment when the buyer does not equals the threshold, which is implied in \cite{myerson1981optimal}. 

We will examine properties of the mechanism. 
\begin{theo}
The $\alpha$-APG mechanisms is $\alpha$-efficient, $(1-\alpha)$-WBB, individually rational, and strategy-proof. 
\end{theo}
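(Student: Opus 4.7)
The plan is to verify the four properties in order of increasing difficulty.

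For $\alpha$-efficiency, I would observe that $M$ always contains any buyer attaining $v^*_N$, so a winner $w\in M$ exists and $v_w\ge \alpha v^*_N$. Since the optimal social surplus is $v^*_N$, the achieved ratio is at least $\alpha$. For $(1-\alpha)$-WBB, only Group 2 contributes negative payments, each bounded below by $-\alpha v^*_N\ge -\alpha v^*$ (using $v^*_{P_i^{close}}\ge 0$), and at most $n-1$ buyers lie in Group 2, so $\sum_i p_i\ge -\alpha(n-1)v^*$, yielding $\beta\ge 1-\alpha$.

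For individual rationality, the key observation is that since $w$ is the closest-in-APG member of $M$, every $j\in P_w^{close}$ satisfies $v'_j<\alpha v^*_N$, hence $v^*_{P_w^{close}}<\alpha v^*_N$. A short case analysis then finishes: Group 4 has utility $0$; a Group 2 buyer has utility $\alpha v^*_N - v^*_{P_i^{close}}\ge \alpha v^*_N - v^*_{P_w^{close}}>0$; a Group 1 winner has utility $v_w-v^*_{P_w^{close}}\ge \alpha v^*_N-v^*_{P_w^{close}}>0$; and for Group 3, combining $v^*_{P_w^{close}}\ge \alpha v^*_{P_w^{far}}$ with $v^*_{P_w^{close}}<\alpha v^*_N$ forces $v^*_N=v_w$, so the utility is $v_w-v^*_{P_w^{close}}/\alpha>0$.

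Strategy-proofness is the main obstacle. I would decompose any misreport $(v'_i,R'_i)$ into a valuation-only deviation with $R'_i=R_i$ followed by a diffusion-only deviation with $v'_i=v_i$, and argue each step weakly decreases utility. For the valuation step, the plan is a Myerson-style threshold argument: first show the allocation is monotone in $v'_i$ (a larger $v'_i$ only makes $i$ more likely to lie in $M$ while making others less likely), then verify that the \emph{effective payment}---the payment on winning minus the side-payment on losing---equals the threshold valuation above which $i$ wins. The four-group payment rule appears to be tuned exactly so that this identity holds, with the Group 1--Group 3 boundary marking where $i$'s threshold transitions from $\alpha v^*_{P_i^{far}}$ (when some other buyer dominates $v^*_N$) to $v^*_{P_i^{close}}/\alpha$ (when $i$ herself pushes $v^*_N$ up). For the diffusion step, hiding a neighbor cannot bring $i$ closer to the seller in APG but only shrinks the downstream reachable set, weakly lowering $v^*_N$; I would argue group by group that this never helps $i$: Group 4 retains utility $0$; a Group 2 buyer's side-payment shrinks; a Group 1 winner may be demoted to Group 3 (strictly higher payment) or lose winning status entirely to a closer buyer that newly enters $M$; and a Group 3 winner faces the same analysis with her fixed payment $v^*_{P_w^{close}}/\alpha$ unaffected but her winning status at risk.

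The hardest substep is the valuation step, because the identity of the winner depends on $v'_i$: varying $v'_i$ shifts $v^*_N$, which reshapes $M$, which can change who wins. I anticipate needing a careful trace of how the threshold, group assignment, and payment co-vary as $v'_i$ crosses the critical values that toggle group membership, together with a verification that whenever $i$ transitions between groups the payment is continuous in the relevant sense so that the threshold characterization can be applied cleanly.
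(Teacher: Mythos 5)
Your treatment of $\alpha$-efficiency, $(1-\alpha)$-WBB, and individual rationality matches the paper's in substance (the paper is in fact terser on efficiency and on IR for Groups 1 and 2 than you are, and its WBB argument likewise just exhibits the worst case on a path graph). For strategy-proofness, however, you take a genuinely different route: the paper proves it by a direct case analysis on which group $i$ occupies when truthful and which group a misreport would land her in, comparing utilities pairwise in each of the resulting cases; you instead propose a Myerson-style argument (monotone allocation plus the identity ``winning payment minus losing side-payment equals the winning threshold''). Your route is viable and arguably more illuminating --- the paper itself remarks that the payment rule is designed so that this threshold identity holds --- and the facts needed to close it are available: the regime distinction $v^*_{P_i^{close}} < \alpha v^*_{P_i^{far}}$ versus $v^*_{P_i^{close}} \geq \alpha v^*_{P_i^{far}}$ is equivalent to $v^*_{P_i^{close}} < \alpha v^*_{N_{-i}}$ versus $v^*_{P_i^{close}} \geq \alpha v^*_{N_{-i}}$, depends only on $\boldsymbol{\theta'}_{-i}$ and $R'_i$ (not on $v'_i$), and determines both $i$'s winning threshold ($\alpha v^*_{P_i^{far}}$ in the first regime, $v^*_{P_i^{close}}/\alpha$ in the second) and whether a losing $i$ lands in Group 2 or Group 4. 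Once these are stated, the ``careful trace'' you defer becomes a two-case check, so the valuation step is less delicate than you anticipate.

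The one genuine flaw is your decomposition of a joint deviation: as written, you compare $(v_i,R_i)$ with $(v'_i,R_i)$ and separately $(v_i,R_i)$ with $(v_i,R'_i)$, but these two inequalities do not chain to bound the utility at $(v'_i,R'_i)$. You need either $u(v_i,R_i)\geq u(v'_i,R_i)\geq u(v'_i,R'_i)$ or $u(v_i,R_i)\geq u(v_i,R'_i)\geq u(v'_i,R'_i)$, and in either chaining the second inequality is a mixed statement --- e.g.\ that restricting diffusion weakly lowers the \emph{true} utility of a buyer who is already misreporting her valuation --- which your group-by-group diffusion analysis must be checked to cover (your cases are phrased for a truthful $v'_i=v_i$). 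The paper sidesteps this by comparing the truthful report directly against every possible joint misreport, classified only by the group the misreport lands in. Your diffusion observations themselves are sound (hiding neighbors leaves $P_i^{close}$ and $v^*_{P_i^{close}}$ unchanged, can only shrink $v^*_{P_i^{far}}$ and $v^*_N$, and can demote a Group 1 winner to Group 3 or hand the item to a newly eligible closer buyer), so the fix is organizational rather than conceptual, but it is needed before the argument is complete.
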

\begin{proof}
See Appendix A.2. 
\end{proof}

By this theorem, we can say that more efficient mechanisms have less WBB in our class of the mechanisms. 

\section{Mechanism for multi-item auctions}

In this section, we will consider multi-item auctions. First
of all, we will show a counter-example to strategy-proofness
in Generalized IDM proposed by \cite{zhao2018selling}, which
is tailored to the multi-item auctions when all buyers need only one item.
The summary of the GIDM is as follows.
First of all, the seller sends items to its direct descendants,
where the number of items each descendant receives is equal to
the number of top $k$ buyers in the sub-tree.
When buyer $i$, who is not within top $k$ winners, receives an item/items, 
she can take away an item from her descendant, if she is willing to
pay a certain price. 
The price is determined such that each buyer has
no incentive to misreport her valuation.
More specifically, the price is equal to the decreased amount of
social surplus caused by allocating an item to the buyer.
Here, the social surplus is calculated such that the allocation of
a buyer who takes away an item, as well as 
the allocation of each of top $k$ buyers, who is not taken away an
item, are fixed.\footnote{%
  We found that GIDM described in \cite{zhao2018selling}
  is not strategy-proof. 
  We contacted one of the authors. He admits there is some
  description errors and suggests
  a revision. Our description here is based on his revision.
  More specifically, the constraints in both the allocation rule and the payment rule must be revised. One of the constraints is $\forall {j \in N_i^{out}}, \pi_j(\boldsymbol{\theta'}) = 0$. Replace the constraint to the following: $\forall {j \in N^{opt} \setminus \{ N_i^{out} \cup D_i} \}, \pi_j(\boldsymbol{\theta'}) = 1$. 
  Here, we show the fact that GIDM is not strategy-proof even after
  this revision.}. 

Consider the graph in Figure 2. The seller $s$ is in the top-left
corner, and the number above or beside each circle is the true
valuation of each buyer. We assume that there are 4 items.
Let us examine the incentive for
buyer $d$.

First, let us assume all buyers act truthfully. 
Then, 4 items are sent to buyer $a$ (since $c$, $d$, $e$, and $g$
are top-4 buyers). Then, buyer $a$ takes away
an item from buyer $c$ (whose valuation is smallest).
Next, buyer $b$ receives remaining 3 items and
takes away an item from buyer $e$, and
sends one item to buyer $c$ (as well as to buyer $g$). 
Then, buyer $c$ takes away one item from $d$.
Here, the price of $c$ is equal to $3$. This is because
assuming the assignments of buyers $a$ and $b$ (who take away items) are fixed,
the best way for allocating remaining two items is
allocating them to $c$ and $d$. 
If we do not allocate an item to $c$, the best way to allocating
remaining two items is allocating them to $f$ and $g$. Thus,
the decreased amount of social surplus except $c$ is $3$.
Thus, buyer $d$ cannot obtain an item and her utility is $0$. 

Next, let us assume buyer $d$ stops spreading the information. 
Then, 3 items are sent to buyer $a$ (since $c$, $d$, and $g$
are top-4 buyers), while one item is sent to buyer $f$.
Then, buyer $a$ takes away
an item from buyer $c$. 
Next, buyer $b$ receives remaining 2 items and
takes away an item from buyer $g$, and
sends one item to buyer $c$.
However, in this case, buyer $c$ cannot take away an item,
since her price becomes $6$. 
This is because
assuming the assignments of buyers $a$, $b$, and $f$ are fixed, 
the best way for allocating remaining one item is 
allocating it to $g$. If the item is allocated to $c$ instead,
the social surplus except $c$ decreases by $6$. 
Thus, buyer $c$ does not take away an item and sends it to buyer $d$.
Buyer $d$ obtains an item and pays $6$. Thus, her utility becomes
positive by stopping the information. This violates
strategy-proofness. 

A failure of the proof is on the bottom
of the right column in page 7, where they simply assert 
that stopping the information is never beneficial.
Superficially, this sounds plausible since if a buyer
stops spreading the information, the number of items sent toward
her can never increase. However, our counter-example shows that 
doing so is actually beneficial. 

\begin{figure}[t]
    \centering
    \includegraphics[height=2.5cm]{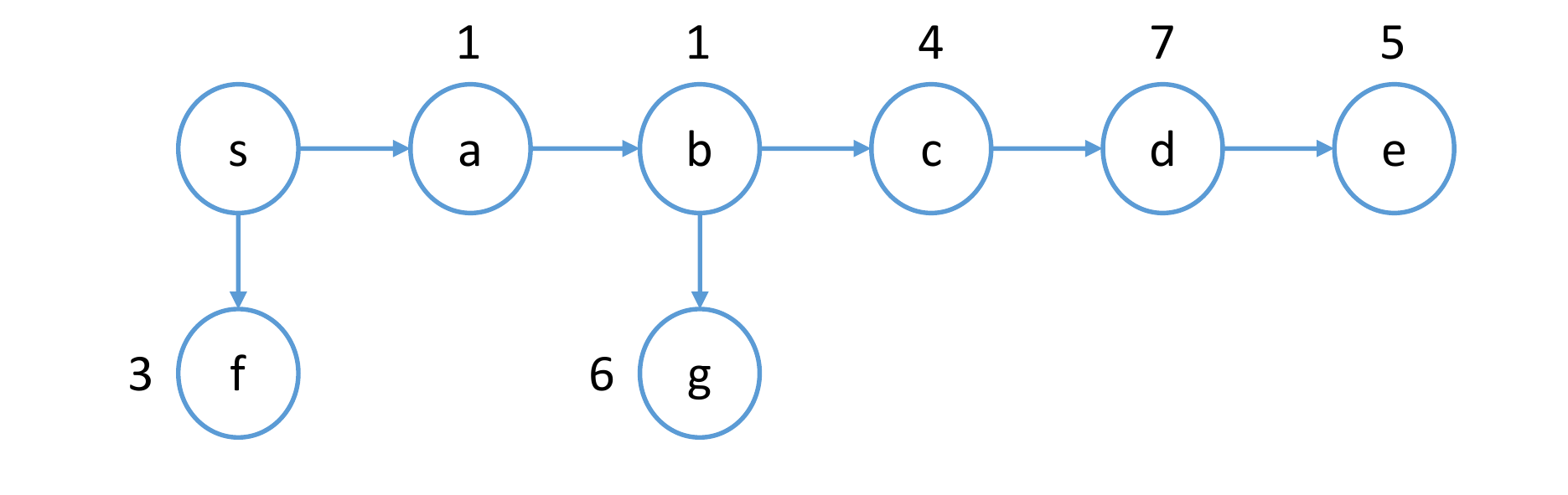}
    \caption{Example: GIDM is not strategy-proof}
    \label{fig:four-items}
\end{figure}

The above counter-example reveals that there is no existing strategy-proof mechanism which is tailored to multi-item auctions when all buyers need only one item. Now, we will analyze wider range of multi-item auctions. We assume that buyers may need multiple items and have decreasing marginal utility, as we defined in Section 2. First of all, we will show an impossibility theorem. Let $n_k = \lfloor \sqrt k \rfloor$ and let $v^*(S,k')$ be the $k'$-th largest declared valuation among $S \subseteq N$. 

\begin{theo}
There is no mechanism which is strategy-proof and $\left( \frac{n_k}{k}+\epsilon \right)$-efficient for any $\epsilon > 0$. 
\end{theo}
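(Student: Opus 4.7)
The plan is to construct a hard network and show that strategy-proofness forces any mechanism to route so much of its surplus to a single ``critical'' buyer that it cannot serve the leaves efficiently. Concretely, I would take the seller $s$ to be connected only to a buyer $c$, and $c$ in turn connected to $k$ leaves $\ell_1,\ldots,\ell_k$; set $c$'s valuation to $(\underbrace{1,\ldots,1}_{n_k},0,\ldots,0)$ and give every leaf single-item demand of value~$1$. In the full profile the optimal allocation gives one item to each leaf and achieves surplus~$k$, whereas in the hiding sub-instance (where $c$ declares no neighbors) the optimum is only $n_k$, attained by allocating $n_k$ items to $c$.

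The first step is to derive a lower bound on $c$'s utility. Inside the hiding sub-instance, parametrize $c$'s per-item value by $a\in[0,1]$ and invoke the single-parameter Myerson identity $U_c^{\emptyset}(a)=U_c^{\emptyset}(0)+\int_0^a g(t)\,dt$, where $g(t)$ is the effective number of valued items $c$ receives when it reports per-item value $t$. Since $\alpha$-efficiency on this sub-instance forces $g(t)\ge\alpha n_k$ for every $t>0$ and individual rationality gives $U_c^{\emptyset}(0)\ge 0$, integrating yields $U_c^{\emptyset}(1)\ge\alpha n_k$. Strategy-proofness on $c$'s neighbor report (truthful versus hiding all leaves, which is indistinguishable from the hiding sub-instance from the mechanism's point of view) then transports the bound to the main instance: $U_c\ge\alpha n_k$.

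The second step converts this utility bound into an upper bound on the welfare the mechanism can produce in the main instance. The welfare decomposes as $\min(f_c,n_k)+L$, where $L$ is the number of items allocated to leaves, subject to $f_c+L\le k$. Every unit of $c$'s utility is financed either by an item that $c$ values (which must come at the expense of $L$) or by a cash subsidy. To rule out that the mechanism pays $U_c$ out of an arbitrary subsidy while still serving every leaf, I would combine strategy-proofness on $c$'s valuation report within the main instance (which forces the subsidy to $c$ to be constant in the reported value whenever the allocation is constant) with individual rationality at the companion profile in which $c$'s true valuation is the zero vector, since any subsidy in the truthful profile must be replicated there and can only be financed by the leaves' payments, which in turn caps how many leaves can be served. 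Tracking these effects shows $W\le n_k+(k-\alpha n_k)$ up to lower-order terms; substituting into $W\ge\alpha k$ forces $\alpha\le n_k/k$, contradicting $\alpha=n_k/k+\epsilon$.

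The hard part is precisely the second step. The utility bound $U_c\ge\alpha n_k$ alone is too weak, because on its face a mechanism could meet it by a cash subsidy to $c$ and still serve every leaf, which would make it $\alpha$-efficient for $\alpha$ close to~$1$; the substantive content of the argument is to close this loophole by linking the subsidy to strategy-proofness on neighbouring profiles. I expect this to be the step where the precise rate $n_k/k$ rather than a weaker constant arises, and, if the bookkeeping above does not suffice, the construction may need to be refined by introducing $n_k$ parallel critical buyers, one per ``slot,'' so that the subsidies required for all of them simultaneously cannot be financed without violating strategy-proofness on a joint hiding deviation.
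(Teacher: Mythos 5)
Your proposal takes a genuinely different route from the paper, and the divergence exposes a real gap at exactly the step you yourself flag as ``the hard part.'' The paper does not argue via a utility lower bound for a critical intermediary. Instead it shows, combining strategy-proofness with the diffusion structure, that the marginal price a buyer faces for moving from $k'$ items (with $0<k'\le n_k$) up to $k''>n_k$ items must be a constant independent of her own valuation, must equal $0$ when the other buyers' values are $0$ (by efficiency), and cannot increase in the other buyers' values because those buyers may be reachable only through her, so she could hide them. Hence that marginal price is identically $0$, which rules out any buyer ever receiving between $1$ and $n_k$ items; every winner therefore receives at least $n_k+1$ items, and on a profile of many unit-demand buyers at most $n_k$ of the $k$ items generate value, which caps the ratio at $n_k/k$. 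Your Step 2 never reaches anything of this form.

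Concretely, two things fail in your write-up. First, the arithmetic of Step 2 is vacuous: from $W\le n_k+(k-\alpha n_k)$ and $W\ge\alpha k$ you only get $\alpha(k+n_k)\le k+n_k$, i.e.\ $\alpha\le 1$, not $\alpha\le n_k/k$; indeed $n_k+(k-\alpha n_k)\ge k$ always, so this bound can never contradict $W\ge\alpha k$. Second, the subsidy loophole you identify is not actually closed: on your star instance the leaves' payments can total up to $k$, which comfortably finances a subsidy of $\alpha n_k\approx\sqrt{k}$ to $c$ while all $k$ items still go to the leaves, and nothing in the IR/SP bookkeeping on companion profiles that you sketch forbids this. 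A VCG-style payment does essentially that on your instance and is fully efficient there, so no welfare accounting confined to that single instance and its neighbors can yield the bound; the obstruction has to come from the zero-threshold phenomenon above. A smaller issue: the theorem assumes only strategy-proofness and $\left(\frac{n_k}{k}+\epsilon\right)$-efficiency, whereas both of your steps invoke individual rationality, so even a repaired version of your argument would prove a weaker statement than the one claimed.
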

\begin{proof}
See Appendix A.3. 
\end{proof}

Next; we will propose a mechanism that is $\frac{n_k}{k}$-efficient. Let $v_{i,sum} = \sum_{l=1}^{n_k} v_{i,l}$ and $v'_{i,sum} = \sum_{l=1}^{n_k} v'_{i,l}$. We call these $n_k$-sum valuations. The winners are the buyers who have  the $n_k$-sum valuation which is not smaller than the $n_k$-th $n_k$-sum valuation. When there is a tie, younger buyers are prioritized. All the winners get just $n_k$ items, and the remaining items are discarded. If a buyer $i$ gets items, the buyer pays $v_{sum}^* (P_i^{close}, n_k)$, where $v_{sum}^* (S, k')$ is the $k'$-th $n_k$-sum valuation in $S \subseteq N$. If buyer $i$ does not get any items, the buyer gets $v_{sum}^* (N, n_k) - v_{sum}^* (P_i^{close}, n_k)$. We call this the generalized APG mechanism (GAPG mechanism). 

Now, we will address properties of the mechanism. 

\begin{theo}
The GAPG mechanism is strategy-proof and $\frac{n_k}{k}$-efficient. 
\end{theo}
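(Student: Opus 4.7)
The plan is to prove strategy-proofness and the $\frac{n_k}{k}$-efficiency bound separately.

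\textbf{Strategy-proofness.} The first step is to observe a key invariance: for any buyer $i$, the set $P_i^{close}$ and the declared sums $v_{j,sum}$ for $j\in P_i^{close}$ are invariant under $i$'s own report $(v'_i, R'_i)$. The point is that any $j\in P_i^{close}$ has a shortest $s$-to-$j$ path not traversing $i$ (otherwise $j$'s distance would strictly exceed $i$'s, contradicting $j\in P_i^{close}$), so $j$'s distance, APG position, and sum are unchanged when $i$ withholds outgoing edges. Hence $C_i := v^*_{sum}(P_i^{close}, n_k)$ is a constant as $\theta'_i$ varies. Consequently $i$'s utility is $v_{i,sum} - C_i$ if she wins and $v^*_{sum}(N_{\text{part}}, n_k) - C_i$ if she loses, where $N_{\text{part}}$ (the set of informed buyers) can only shrink when $i$ withholds. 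The monotonicity $v^*_{sum}(A, n_k) \leq v^*_{sum}(B, n_k)$ for $A\subseteq B$ then finishes a case analysis: if $i$ wins under truth, $v_{i,sum}\geq v^*_{sum}(N_{\text{part}}^{\text{true}}, n_k)\geq v^*_{sum}(N_{\text{part}}^{\text{mis}}, n_k)$ for any deviation, so switching outcomes cannot help; if $i$ loses under truth, $v_{i,sum}\leq v^*_{sum}(N_{\text{part}}^{\text{true}}, n_k)$, so deviating to win yields a weakly smaller payoff, and staying a loser can only shrink $N_{\text{part}}$.

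\textbf{Efficiency.} Let $s_1\geq s_2\geq\cdots$ be the sorted values of $v_{i,sum}$, so the GAPG surplus equals $\sigma_{n_k}:=\sum_{j=1}^{n_k}s_j$. The central per-buyer bound is
\[\sum_{l=1}^{f_i}v_{i,l}\;\leq\;v_{i,sum}\cdot\max\!\Bigl(1,\tfrac{f_i}{n_k}\Bigr),\]
obtained by splitting on $f_i\leq n_k$ (trivial) versus $f_i\geq n_k$ (each excess marginal $v_{i,l}$ with $l>n_k$ is at most $v_{i,n_k}\leq v_{i,sum}/n_k$ by decreasing marginal utility, so the sum is at most $v_{i,sum}f_i/n_k$). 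Summing over $i$ and maximizing over feasible $f$ with $\sum f_i\leq k$, the per-item value of this bound is $v_{i,sum}$ at $f_i=1$ and only $v_{i,sum}/n_k$ once $f_i\geq n_k$ (items $2,\ldots,n_k$ are pure waste), so the maximizer sets $f_i=1$ on as many top-sum buyers as possible and channels any leftover items into boosting the largest-sum buyer past $n_k$. A case analysis on the number of included buyers then bounds this maximum by $(k/n_k)\sigma_{n_k}$; in the cleanest regime $n\geq k$ this reduces to $\sigma_k\leq\sigma_{n_k}+(k-n_k)s_{n_k}\leq (k/n_k)\sigma_{n_k}$, using $s_j\leq s_{n_k}$ for $j\geq n_k$ and $\sigma_{n_k}\geq n_k s_{n_k}$, while the remaining regimes additionally use $\sigma_{n_k}\geq s_1$ to absorb the boost term. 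Combining yields $\mathrm{OPT}\leq (k/n_k)\cdot\mathrm{GAPG}$, hence the ratio is at least $n_k/k$.

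The hardest part will be the final combinatorial inequality: the two tight examples---one buyer absorbing all $k$ items with constant marginals, versus $k$ distinct buyers each taking a single item---sit in different regimes of the case analysis and must both be handled uniformly. Strategy-proofness, by contrast, reduces cleanly to a scalar comparison once the $P_i^{close}$ invariance is recorded.
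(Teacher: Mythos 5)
Your proposal is correct. The strategy-proofness half is essentially the paper's own argument: the paper likewise reduces everything to the two observations that $v_{sum}^*(P_i^{close},n_k)$ is unaffected by $i$'s report and that $v_{sum}^*(N,n_k)$ can only decrease when $i$ withholds edges; your justification that $P_i^{close}$ is invariant (every $j\in P_i^{close}$ has a shortest path from $s$ avoiding $i$, since a path through $i$ has length at least $l^P_i+1$) is a detail the paper leaves implicit but is worth recording.

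The efficiency half is where you genuinely diverge, and your route is the more robust one. The paper argues via the claim that the $l$-th largest $n_k$-sum dominates $v^*(N,l)$, the $l$-th largest marginal value over all (buyer, item) pairs, for $l\le n_k$. That claim fails precisely on the tight instance --- one buyer with $k$ equal marginals and everyone else at zero has second-largest sum $0$ but $v^*(N,2)>0$ --- so the paper's displayed chain does not go through there, even though the theorem is true. Your per-buyer bound $\sum_{l=1}^{f_i}v_{i,l}\le v_{i,sum}\max(1,f_i/n_k)$ sidesteps this and treats both tight instances uniformly. The final combinatorial step you flag as the hard part does close with the ingredients you list: writing $T$ for the buyers with $f_i\ge 1$ and $U\subseteq T$ for those with $f_i>n_k$, your bound equals $\sum_{i\in T}v_{i,sum}+\frac{1}{n_k}\sum_{i\in U}v_{i,sum}(f_i-n_k)$; the first term is at most $\sigma_{n_k}\bigl(1+\max(0,|T|-n_k)/n_k\bigr)$ via $s_j\le s_{n_k}\le\sigma_{n_k}/n_k$, the second is at most $\sigma_{n_k}E/n_k$ with $E=\sum_{U}(f_i-n_k)$ via $s_1\le\sigma_{n_k}$, and feasibility gives $|T|+|U|(n_k-1)+E\le k$, whence $\max(0,|T|-n_k)+E\le k-n_k$ in both regimes (directly when $|T|\ge n_k$; when $|T|<n_k$, either $U=\emptyset$ and $E=0$, or $|U|n_k+E\le k$ forces $E\le k-n_k$). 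So the case analysis is uniform after all, and the ratio $n_k/k$ follows.
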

\begin{proof}
See Appendix A.4. 
\end{proof}

Theorems 2 and 3 completely answer the question on the approximability of strategy-proof mechanisms with decreasing marginal utility, i.e., Theorem 2 provides an upper bound of $\alpha$, and Theorem 3 shows the bound is tight by providing a strategy-proof mechanism that exactly achieves the upper bound. Additionally, this mechanism is individually rational but neither WBB nor non-wasteful, that is, there is an item no buyer gets. 

On the other hand, when all buyers only require a single item, there exists a efficient mechanism which is almost the same as GAPG. The payment rule of the mechanism is completely the same as GAPG when $n_k = 1$, and the winners are just the top $k$ buyers. We can easily show that the mechanism is efficient and strategy-proof by the similar way to the proof of Theorem 3. 

\section{Concluding Remarks}

We studied the extent to which a mechanism achieves efficiency in auctions via a social network. In single-item auctions, there are two existing mechanisms, the VCG and IDM, which \cite{li2017mechanism} analyzed to show that the VCG is efficient but not WBB, and IDM is WBB but not efficient. We propose the $\alpha$-APG mechanisms, a class of mechanisms which operate a trade-off between efficiency and weakly budget balancedness. When this mechanism achieves a more efficient allocation, it achieves less WBB. 

Second, we indicated a counter-example to strategy-proofness in the GIDM proposed by \cite{zhao2018selling}, which is tailored to multi-item auctions when all buyers need only one item. This increases the importance of our next contribution. Third, we analyzed multi-item auctions with decreasing marginal utility and proposed the generalized APG mechanism, which is $\frac{n_k}{k}$-efficient, strategy-proof, and individually rational. We showed that no mechanism exists that is strategy-proof and has the larger efficiency measure than the GAPG mechanism. In addition, this mechanism is not weakly budget balanced. In multi-item auctions with decreasing marginal utility, since no mechanisms have been proposed that are weakly budget balanced and strategy-proof, future works will find such a mechanism. Moreover, the GAPG mechanism is not non-wasteful, so another important question is to what extent a mechanism can achieve efficiency under strategy-proofness and non-wastefulness. 

\bibliography{reference}
\appendix
\section{Appendix}

\subsection{The proof of Proposition 1} 
Suppose for a contradiction that there exists such a mechanism. Fix $i \in N$ and $\boldsymbol{\theta'}_{-i}$ arbitrarily. First of all, assume that buyer $i$ is the winner when $i$ tells the truth and spreads the information. By efficiency, $v_i \geq v^*_{N_{-i}}$, where $N_{-i}$ is a set of all buyers without $i$. If buyer $i$ tells the truth and spreads the information, the utility of $i$ is $v_i-p_i(\boldsymbol{\theta'}_{-i}, (v_i,R_i))$. If $i$ lies and reports $v'_i < v^*_{N_{-i}}$, $i$'s utility is 
\begin{align*}
& - p_i(\boldsymbol{\theta'}_{-i}, (v'_i,R_i)) \leq v_i - p_i(\boldsymbol{\theta'}_{-i}, (v_i,R_i)) \\
\Leftrightarrow \ & - p_i(\boldsymbol{\theta'}_{-i}, (v'_i,R_i)) + p_i(\boldsymbol{\theta'}_{-i}, (v_i,R_i)) \leq v_i.  
\end{align*}
Since the inequality holds for any $v_i \geq v^*_{N_{-i}}$, 
\begin{equation*}
- p_i(\boldsymbol{\theta'}_{-i}, (v'_i,R_i)) + p_i(\boldsymbol{\theta'}_{-i}, (v_i,R_i)) \leq v^*_{N_{-i}}
\end{equation*}
because $p_i$ is constant except for $v'_i = v^*_{N_{-i}}$. Similarly, since $v_i \geq p_i(\boldsymbol{\theta'}_{-i}, (v_i,R_i))$ by individual rationality, 
\begin{equation*}
v^*_{N_{-i}} \geq p_i(\boldsymbol{\theta'}_{-i}, (v_i,R_i)). 
\end{equation*}

Secondly, assume that buyer $i$ is not the winner when $i$ tells the truth and spreads the information. Then, $v_i \leq v^*_{N_{-i}}$. If buyer $i$ tells the truth and spreads the information, the utility of $i$ is $-p_i(\boldsymbol{\theta'}, (v_i,R_i))$. If $i$ lies and reports $v'_i > v^*_{N_{-i}}$, the utility is 
\begin{align*}
& v_i - p_i(\boldsymbol{\theta'}_{-i}, (v_i,R_i)) \leq - p_i(\boldsymbol{\theta'}_{-i}, (v'_i,R_i)) \\
\Leftrightarrow \ & v_i \leq - p_i(\boldsymbol{\theta'}_{-i}, (v'_i,R_i)) + p_i(\boldsymbol{\theta'}_{-i}, (v_i,R_i)).  
\end{align*}
Since the inequality holds for any $v_i \leq v^*_{N_{-i}}$, 
\begin{equation*}
v^*_{N_{-i}} \leq - p_i(\boldsymbol{\theta'}_{-i}, (v'_i,R_i)) + p_i(\boldsymbol{\theta'}_{-i}, (v_i,R_i)). 
\end{equation*}
Then, 
\begin{equation*}
v^*_{N_{-i}} = - p_i(\boldsymbol{\theta'}_{-i}, (v'_i,R_i)) + p_i(\boldsymbol{\theta'}_{-i}, (v''_i,R_i)) 
\end{equation*}
for any $v'_i < v^*_{N_{-i}}$ and any $v''_i > v^*_{N_{-i}}$. 

Suppose an additional assumption that $v^*_{N_{-i}} > 0$, $v^*_{N_{-i}} > v_i > 0$, and $v'_j = 0$ for any other buyer $j$. Moreover, assume that the buyer whose valuation is $v^*_{N_{-i}}$ cannot get the information if $i$ does not tell the information. In this case, when $i$ stops spreading the information, the utility is 
\begin{equation*}
v_i - 0 \leq - p_i(\boldsymbol{\theta'}_{-i}, (v_i,R_i)). 
\end{equation*}
Then, 
\begin{equation*}
v^*_{N_{-i}} - 0 \leq - p_i(\boldsymbol{\theta'}_{-i}, (v_i,R_i)). 
\end{equation*}
By the above discussion, if $i$ tells the truth and spreads the information, the buyer whose valuation is $v^*_{N_{-i}}$ pays at most $v_i$. As a result, the seller's revenue is at most $v_i - v^*_{N_{-i}} < 0$ by individual rationality. By weakly budget balancedness, this is a contradiction. 

\subsection{The proof of Theorem 1}
Fix  $i \in N$ and $\boldsymbol{\theta'}_{-i} \in \boldsymbol{\Theta'}_{-i}$ arbitrarily. We will show that for any $\boldsymbol{\theta'} \in \boldsymbol{\Theta'}$, the utility of $i$ is maximized when $i$ reports her type truthfully to prove strategy-proofness. First of all, assume that buyer $i$ is classified into Group 1 when $i$ reports her type truthfully. Then, $v_i \geq \alpha v^*_N$ and $v^*_{P_i^{close}} < \alpha v^*_N$. If $i$ reports her type truthfully, the utility of $i$ is $v_i - v^*_{P_i^{close}}$ which does not depend on $\theta'_i$. If $i$ misreports her type and is classified into Group 2, the utility of $i$ is 
\begin{equation*}
\alpha v^*_N - v^*_{P_i^{close}} \leq v_i - v^*_{P_i^{close}}
\end{equation*}
by $v_i \geq \alpha v^*_N$. If $i$ misreports her type and is classified into Group 3, the utility of $i$ is 
\begin{equation*}
v_i - \frac{v^*_{P_i^{close}}}{\alpha} \leq v_i - v^*_{P_i^{close}}
\end{equation*}
by $0 < \alpha < 1$. If $i$ misreports her type and is classified into Group 4, the utility of $i$ is $0 \leq v_i - v^*_{P_i^{close}}$ because $v^*_{P_i^{close}} < \alpha v^*_N \leq v_i$. Therefore, $i$ who is classified into Group 1 when $i$ tells the truth has no incentives to misreport her type. 

Secondly, assume $i$ is classified into Group 2 when $i$ reports her type truthfully. Then, $v_i < \alpha v^*_N$ and $v^*_{P_i^{close}} < \alpha v^*_N$. If $i$ reports her type truthfully, the utility of $i$ is $\alpha v^*_N - v^*_{P_i^{close}}$, which does not depend on $v'_i$ because $i$ does not report the maximum value. In addition, if $i$ diffuses the information to $R'_i \subsetneq R_i$, the utility does not increase because $v^*_N$ does not increase. Then, if $i$ misreports her type and is classified into Group 2, the utility does not increase. If $i$ misreports her type and is classified into Group 1, the utility of $i$ is 
\begin{equation*}
v_i - v^*_{P_i^{close}} < \alpha v^*_N - v^*_{P_i^{close}} 
\end{equation*}
by $v_i < \alpha v^*_N$. If $i$ misreports her type and is classified into Group 3, the utility of $i$ is 
\begin{equation*}
v_i - \frac{v^*_{P_i^{close}}}{\alpha} \leq v_i - v^*_{P_i^{close}} < \alpha v^*_N - v^*_{P_i^{close}}. 
\end{equation*}
If $i$ misreports her type and is classified into Group 4, the utility of $i$ is 
\begin{equation*}
0 < \alpha v^*_N - v^*_{P_i^{close}} 
\end{equation*}
because $v^*_{P_i^{close}} < \alpha v^*_N$. Therefore, $i$ who is classified into Group 2 when $i$ reports her type truthfully, has no incentives to misreport her type. 

Thirdly, assume $i$ is classified into Group 3 when $i$ reports her type truthfully. Then, $v_i \geq \alpha v^*_N$ and $v^*_{P_i^{close}} < \alpha v^*_N$. In addition, $i = m$ because $i = w$ and $v^*_{P_i^{close}} \geq \alpha v^*_{P_i^{far}}$. Notice that $i$ is classified into either Group 3 or Group 4 regardless what $i$ reports. This is because if $i$ does not spread the information, $v^*_{P_i^{close}}$ is unchanged, and $v^*_{P_i^{far}}$ can decrease. If $i$ tells the truth, the utility of $i$ is $v_i - \frac{v^*_{P_i^{close}}}{\alpha}$ which does not depend on $\theta'_i$. If $i$ misreports her type and is classified into Group 4, the utility of $i$ is 
\begin{equation*}
0 < v_i - \frac{v^*_{P_i^{close}}}{\alpha}
\end{equation*}
by $v^*_{P_i^{close}} < \alpha v^*_N = \alpha v_i$. Therefore, $i$ who is classified into Group 3 when $i$ reports her type truthfully, has no incentives to misreport her type. 

Finally, assume $i$ is classified into Group 4 when $i$ reports her type truthfully. Then, $v^*_{P_i^{close}} \geq \alpha v^*_N$. Since $i \neq w$, $\alpha v_i \ ( \ \leq v'_w) \leq v^*_{P_i^{close}}$. $i$ is also classified into either Group 3 or Group 4 even if $i$ misreports her type. If $i$ tells the truth, the utility of $i$ is $0$ which does not depend on $\theta'_i$. If $i$ misreports her type and is classified into Group 3, the utility of $i$ is 
\begin{equation*}
v_i - \frac{v^*_{P_i^{close}}}{\alpha} \leq 0
\end{equation*}
by $\alpha v_i \leq v^*_{P_i^{close}}$. Therefore, $i$ who is classified into Group 4 when $i$ reports her type truthfully, has no incentives to misreport her type. As a result, the $\alpha$-APG mechanisms is strategy-proof. 

Next, we will prove individual rationality. If $i$ is classified into Group 1, Group 2, or Group 4 when $i$ tells the truth, the mechanism is obviously individually rational. Consider that $i$ is classified into Group 3 when $i$ tells the truth. Then, $v^*_{P_i^{close}} < \alpha v_i$ by the above discussion about Group 3. Thus, 
\begin{equation*}
v_i - \frac{v^*_{P_i^{close}}}{\alpha} = \frac{1}{\alpha} \left\{ \alpha v_i - v^*_{P_i^{close}} \right\} > 0. 
\end{equation*}
As a result, the $\alpha$-APG mechanisms is individually rational. 

By the definition of the mechanism, it is $\alpha$-efficient. Then; we will show that it is $(1-\alpha)$-WBB and complete the proof. By its definition, $\sum_{i \in N} p_i(\boldsymbol{\theta'})$ is minimized when $v^* = v^*_N$, and the graph $G(\boldsymbol{\theta})$ is a path graph. In this case, 
\begin{equation*}
1 + \min_{\boldsymbol{\theta'} \in \boldsymbol{\Theta'}} \frac{\sum_{i \in N} p_i(\boldsymbol{\theta'})}{(n-1) v^*} = 1 - \frac{(n-1) \alpha v^*_N}{(n-1) v^*} = 1 - \alpha. 
\end{equation*}
As a result, the $\alpha$-APG mechanisms is $(1-\alpha)$-WBB. 

\subsection{The proof of Theorem 2}
Assume for a contradiction that there exists such a mechanism. First of all, we will show that there is no buyer who gets $k'$ items, where $n_k \geq k' > 0$. Suppose for a contradiction that there exists such a buyer $i$ for some $\theta'_i$ and some $\boldsymbol{\theta'}_{-i}$. If $v'_{i,1} = v'_{i,2} = \cdots =v'_{i,k} > \frac{k-n_k}{k \epsilon} v^* (N \setminus \{ i \},1)$, buyer $i$ must get more than $n_k$ items by $\left( \frac{n_k}{k}+\epsilon \right)$-efficiency. Then, assume that buyer $i$ gets $k'$ items when $i$ reports $\theta'_i$ and gets $k'' > n_k$ items when $i$ reports $\theta''_i$. 

We, first, assume that $i$'s type is $\theta'_i$. When buyer $i$ tells the truth, the utility is $\sum_{l=1}^{k'} \{ v_{i,l} - p_{i,l} \}$. When buyer $i$ lies and gets $k''$ items, the utility is 
\begin{equation*}
\sum_{l=1}^{k''} \{ v_{i,l} - p_{i,l} \} \leq \sum_{l=1}^{k'} \{ v_{i,l} - p_{i,l} \}
\end{equation*}
by strategy-proofness. Then, 
\begin{equation*}
\sum_{l=k'}^{k''} v_{i,l} \leq \sum_{l=k'}^{k''} p_{i,l}. 
\end{equation*}
Then, there must be an upper bound of $\sum_{l=k'}^{k''} v_{i,l}$ under the constraint that buyer $i$ gets $k'$ items by telling the truth because the mechanism is strategy-proof. Let $ub$ be the upper bound. We get $ub \leq \sum_{l=k'}^{k''} p_{i,l}$ by strategy-proofness. 

Next, assume that $i$'s type is $\theta''_i$. When buyer $i$ tells the truth, the utility is $\sum_{l=1}^{k''} \{ v_{i,l} - p_{i,l} \}$. When buyer $i$ lies and gets $k'$ items, the utility is 
\begin{equation*}
\sum_{l=1}^{k'} \{ v_{i,l} - p_{i,l} \} \leq \sum_{l=1}^{k''} \{ v_{i,l} - p_{i,l} \}
\end{equation*}
by strategy-proofness. Then, 
\begin{equation*}
\sum_{l=k'}^{k''} v_{i,l} \geq \sum_{l=k'}^{k''} p_{i,l}. 
\end{equation*}
Let $lb$ be the lower bound of $\sum_{l=k'}^{k''} v_{i,l}$ under the constraint that buyer $i$ gets $k'$ items by telling the truth. Then; $lb \geq \sum_{l=k'}^{k''} p_{i,l}$ holds by strategy-proofness. Therefore, $ub = lb = \sum_{l=k'}^{k''} p_{i,l}$. Therefore, $ub$ does not depend on $i$'s own valuations. Since $\frac{k-n_k}{k \epsilon} v^* (N \setminus \{ i \},1) \geq ub = lb$ holds by $\left( \frac{n_k}{k}+\epsilon \right)$-efficiency, $ub = lb =0$ when $v^* (N \setminus \{ i \},1) = 0$. In addition, $ub$ is not an increasing function of $v^* (N \setminus \{ i \},1)$ because the buyer whose valuation is $v^* (N \setminus \{ i \},1)$ may only get the information through $i$. Therefore, $ub = lb =0$ for any $\boldsymbol{\theta'}_{-i}$, which contradicts the existence of a buyer who gets $k'$ items. 

Then, all buyers who get some items get $n_k+1$ items or more. Then, the number of winners is $n_k$ or fewer. Consider $v_{1,1} = v_{n,1} > 0$ and $v_{1,l} = v_{n,l} = 0$ for any $l=2,\ldots,k$. Then, this mechanism achieves $\left( \frac{n_k}{k} \right)$-efficiency or less. This is a contradiction. 

\subsection{The proof of Theorem 3}
First of all, we will show its strategy-proofness. Fix $i \in N$ and $\boldsymbol{\theta'}_{-i} \in \boldsymbol{\Theta}_{-i}$ arbitrarily. We will show that buyer $i$ has no incentives to lie or to stop spreading the information. We, first, assume that buyer $i$ gets nothing when the buyer tells the truth. Then, $v_{i,sum} \leq v_{sum}^* (N, n_k)$ and buyer $i$ gets $v_{sum}^* (N, n_k) - v_{sum}^* (P_i^{close}, n_k)$ if $i$ tells the truth and spreads the information. Notice that $v_{sum}^* (P_i^{close}, n_k)$ does not depend on $R_i$, and $v_{sum}^* (N, n_k)$ decreases when buyer $i$ does not spread the information. Then, buyer $i$ does not have incentives to stop spreading the information. If buyer $i$ lies and becomes a winner, the utility is 
\begin{equation*}
v_{i,sum} - v_{sum}^* (P_i^{close}, n_k) \leq v_{sum}^* (N, n_k) - v_{sum}^* (P_i^{close}, n_k) 
\end{equation*}
because $v_{i,sum} \leq v_{sum}^* (N, n_k)$. Then, buyer $i$ has no incentives to lie. Secondly, assume that buyer $i$ is a winner. Then, $v_{i,sum} \geq v_{sum}^* (N, n_k)$. When $i$ tells the truth, the utility is 
\begin{equation*}
v_{i,sum} - v_{sum}^* (P_i^{close}, n_k). 
\end{equation*}
Notice that the utility does not depend on both $v'_i$ and $R'_i$. Then, $i$ does not have incentives to stop spreading the information. If buyer $i$ lies and becomes a loser, the utility of buyer $i$ is 
\begin{equation*}
v_{sum}^* (N, n_k) - v_{sum}^* (P_i^{close}, n_k) \leq v_{i,sum} - v_{sum}^* (P_i^{close}, n_k)
\end{equation*}
because $v_{i,sum} \geq v_{sum}^* (N, n_k)$. As a result, the GAPG mechanism is strategy-proof. 

Secondly, we will show that the mechanism is $\frac{n_k}{k}$-efficient. Let $v_{i,sum}^l$ be the $l$-th $n_k$-sum valuation for any $l=1, \ldots, n$. Since $v_{i,sum}^l \geq v^*(N,l)$ for any $l=1,\ldots,n_k$, 
\begin{equation*}
\frac{\sum_{l=1}^{n_k} v_{i,sum}^l}{\sum_{l=1}^k v^*(N,l)} \geq \frac{n_k v_{i,sum}^{n_k}}{n_k v^*(N,n_k) + \sum_{l=n_k+1}^k v^*(N,l)} \geq \frac{n_k}{k}
\end{equation*}
for any $\boldsymbol{\theta'}$. By Theorem 2, the GAPG mechanism is $\frac{n_k}{k}$-efficient. 

\end{document}